%% file: main.tex
\documentclass[11pt]{article}
\usepackage[hmargin=1in,vmargin=1in]{geometry}
\usepackage{hchang}
\usepackage{thm-restate}
\usepackage{cleveref}
\usepackage{comment}

\newtheorem{theorem}{Theorem}[section]
\newtheorem{lemma}[theorem]{Lemma}

\newtheorem{definition}[theorem]{Definition}
\newtheorem{question}[theorem]{Question}

\def\eps{\epsilon}

\newcommand{\ball}{\mathbf{B}}

\begin{document}
	
\title{Three trees suffice for a constant stretch in minor-free graphs}

\author{%
    Hung Le%
\thanks{Manning CICS, UMass Amherst. Email: {\tt hungle@cs.umass.edu}}  
\and
Huy Pham
\thanks{Hanoi University of Science and Technology. Email: {\tt huy.phamquang@hust.edu.vn}}
\and
Cuong Than%
\thanks{Manning CICS, UMass Amherst. Email: {\tt cthan@umass.edu}}  
\and
Tuan Tran
\thanks{School of Mathematical Sciences, University of Science and Technology of China. Email: {\tt trantuan@ustc.edu.cn}}
}

\date{}

\maketitle
	
\thispagestyle{empty}

\begin{abstract}
    In this short note, we show that $H$-minor-free graphs have a tree cover with $3$ trees and constant stretch for any fixed graph $H$. The number of trees matches the recent lower bound by  Chen, Tan, and Xu~\cite{CTX26} who showed that a toroidal grid requires at least $3$ trees for constant stretch. Our result is obtained by establishing a connection between tree covers and Assouad--Nagata dimension and then invoking the recent dimension bound for minor-free metrics by Liu~\cite{Liu25}.
\end{abstract}

\input{intro}
\input{prem}

\input{3treecover}

\section*{AI Disclosure}

The authors only use AI tools (ChatGPT/Gemini) for proofreading the papers. The write-up and ideas are all by the human authors. Perhaps results like this short note are no longer difficult to obtain with the recent advanced AI tools. Yet it took us a few months to notice the connection established here, which we are very excited about.

\section*{Acknowledgment}

This work was initiated at the Application Driven Mathematics (ADM) 2025 summer program organized by the VinBigdata Research Institute. Hung Le and Cuong Than are supported by the NSF CAREER award CCF-2237288 and the NSF grants CCF-2517033. Cuong Than is also supported by a Google Ph.D. Fellowship.
Tuan Tran is supported by the Excellent
Young Talents Program (Overseas) of the National Natural Science Foundation of China under Grant No.
GG0010007003.
\bibliographystyle{plain}
\bibliography{refs, FOCS23}
\end{document}

%% file: intro.tex
\section{Introduction}

Let $(X,\delta)$ be a metric space. A tree cover of $X$ is a collection $\mathcal{T}$ of edge-weighted trees such that for every tree $T\in\mathcal{T}$, $(T,d_T)$ \EMPH{dominates} $(X,\delta)$; that is, $d_T(u,v)\geq\delta(u,v)$ for all $u,v\in X$. The cover $\mathcal{T}$ has \EMPH{stretch $t$} if, for every pair $u,v\in X$, there exists a tree $T\in\mathcal{T}$ such that $d_T(u,v)\leq t\cdot\delta(u,v)$. The same definition applies to graphs equipped with their shortest-path metrics. 

In algorithm design, tree covers allow one to reduce solving an algorithmic problem on (complicated) metric spaces and graphs to solving the problem on trees, which typically admits simple and efficient solutions. Examples abound, including distance oracles, labeling and routing schemes, spanners, and matching, among others~\cite{TZ05,KLMS22,ACRX22,CCL+23a,CCL+23b,CCL+26,LMS+26}. Thus, tree covers have been studied for decades~\cite{AP92,AKPW95,ADM+95,GKR01,MN06,DYL06,CGMZ16,BFN22,CCL+23a,CCL+23b,CCL+26,LLS+26,CTX26,Hua26,LMS+26,BKT26}.

\paragraph{The fixed stretch paradigm.} In algorithmic applications of tree covers, the stretch often controls the \EMPH{quality} of the solution, while the number of trees dictates the \EMPH{cost} of obtaining the solution, which is either time, space, or both. For example,
a distance oracle constructed from a tree cover with $k$ trees and stretch $t$ has space $O(kn)$, query time $O(k)$, and distance approximation factor $t$. From this point of view, unsurprisingly, the vast majority of the literature has focused on the \EMPH{fixed-stretch paradigm}: fixing a stretch factor $t$, and constructing a tree cover with a small number of trees (as a function of $t$ and $n$). This question has been extensively investigated and well understood for various metric spaces and graph classes~\cite{TZ05,MN06,ADDJS93,ADM+95,CGMZ16,LLS+26,GKR01,BFN22,CCL+23a,CCL+23b}. A sample of representative results includes a tree cover for general metrics with stretch $k$ and $O(kn^{O(1/k)})$ trees~\cite{TZ05,MN06}, which can be made Ramsey~\cite{MN06}, a tree cover for Euclidean/doubling metrics with stretch $1+\epsilon$ and $\epsilon^{-O(d)}$ trees for any fixed $\epsilon\in (0,1)$~\cite{ADM+95,CGMZ16,BFN22,CCL+26,LLS+26}, and a tree cover for planar/minor-free metrics with stretch $1+\epsilon$ and $\eps^{-O(1)}$ trees for any fixed $\eps\in (0,1)$~\cite{CCL+23a,CCL+23b}. These results are either tight or nearly tight.  

An orthogonal question is to fix the number of trees to be $k\geq 2$, and construct a tree cover with small stretch (as a function of $k$). Surprisingly, this question has only recently been explored and remains wide open. It falls into the \EMPH{fixed cost paradigm} where fixing the number of trees is equivalent to fixing the cost (time or space) of downstream algorithmic tasks.  Our result here is also situated in this paradigm.  

\paragraph{The fixed cost paradigm.} For general metrics, the pioneering work of Bartal, Fandina, and Neiman constructed a (Ramsey) tree cover of size $k$ and stretch $O(n^{1/k}\log^{1-1/k}(n))$ for any $k\geq 1$. For the lower bound, Chen, Tan, and Xu~\cite{CTX26} made a surprising connection to combinatorial topology and showed a lower bound $\Omega(n^{1/2^{k-1}})$ on the stretch for any constant $k\geq 1$. Their lower bounds are almost tight for $k = 2$ (the case $k = 1$ is trivial). The lower bound has been recently improved to  $\Omega(n^{1/O(k^2)})$~\cite{Hua26}. However, (nearly) tight bound, which likely is $\Omega(n^{1/k})$, remains wide open for any $k\geq 3$. 

For well-studied structured classes of metrics and graphs, such as Euclidean/doubling metrics and planar/minor-free graphs, as the number of trees increases from $1$ to $+\infty$, we expect the stretch reduces to $O(1)$ at some point (the stretch constant could depend on the structural parameter of the input metric/graph such as the dimension or the size of the excluded minor).  Thus, a basic question in these cases is: 

\begin{question}\label{quest:basic}
What is the minimum number of trees needed to obtain a constant stretch for Euclidean /doubling metrics (of constant dimension) and planar/minor-free graphs? 
\end{question}

Of course, knowing how the stretch degrades as a function of the number of trees would also answer \Cref{quest:basic}, but we are very far from that goal.

For point sets in  $\mathbb{R}^d$ for a constant $d$, the answer to \Cref{quest:basic} seems to be exactly $d$ trees, but proving it remains challenging. A pair of recent papers~\cite{LMS+26,BKT26} gave the answer for the $d=2$ case, showing that $2$ trees are sufficient (and necessary) for a constant stretch. For $d = 3$, by embedding the toroidal grid into $\mathbb{R}^3$ with constant distortion, the lower bound for the toroidal grid by Chen, Tan, and Xu~\cite{CTX26} implies that 3 trees are necessary for constant stretch (while the current upper bound is $5$ trees~\cite{Cha98}). For constant $d\geq 3$,   Chan~\cite{Cha98} constructed a tree cover with $2\ceil{d/2}+1 \approx d+1$ trees and constant stretch, but the lower bound is very far from $d$~\cite{CTX26,BKT26}. An interesting, but perhaps not surprising, takeaway is that \EMPH{the number of trees needed for constant stretch increases as the dimension increases}. 

For minor-free graphs, \Cref{quest:basic} is wide open. Does the number of trees needed increase with the size of the minor, similar to the Euclidean case?  Chen, Tan, and Xu~\cite{CTX26} showed that for the torus grid, at least $3$ trees are needed for constant stretch. There is no matching upper bound, even for the toroidal grid graph. Since the torus has genus $g = 1$, it is conceivable that one could show a stronger lower bound for grid graphs embedded in a surface with genus $g \geq 2$. In particular, the number of trees could increase as a function of the genus. Such a result would imply that the number of trees needed increases with the size of the minor for minor-free graphs. Perhaps surprisingly, in this short note, we show that the number of trees \EMPH{stays the same as the size of the excluded minor increases}. In particular, \EMPH{3 trees suffice} for any graph class excluding a fixed minor, matching the lower bound by Chen, Tan, and Xu~\cite{CTX26} for toroidal graphs.

\begin{theorem}\label{thm:main}
For every fixed graph $H$, there exists a constant $t_H$ such that every $H$-minor-free graph admits a tree cover of size $3$ and stretch at most $t_H$. Furthermore, for every integer $w > 0$, there exists a constant $t_w$ such that every graph with treewidth bounded by $w$ admits a tree cover of size $2$ and stretch $t_w$.
\end{theorem}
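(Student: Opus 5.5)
The plan is to pass through Assouad--Nagata dimension, as the abstract suggests. The key bridge is a general lemma: \emph{if a metric space $(X,\delta)$ has Assouad--Nagata dimension at most $n$ with constant $c$, then $X$ admits a tree cover with $n+1$ trees and stretch $O(c)$.} Recall what dimension $n$ with constant $c$ means. For every scale $r>0$ there is a partition of $X$ into clusters of diameter at most $cr$ that can be colored with $n+1$ colors so that, within each color class, any two distinct clusters are at distance greater than $r$. Equivalently, every ball of radius $r$ meets at most one cluster of each color.

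Granting the lemma, \Cref{thm:main} follows from two dimension bounds. Liu~\cite{Liu25} shows that $H$-minor-free graphs have Assouad--Nagata dimension at most $2$, with a constant depending only on $H$; this gives $3$ trees. For the treewidth part, I would invoke the known result that bounded-treewidth graphs have Assouad--Nagata dimension at most $1$ (Distel, or Liu's work); this gives $2$ trees.

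To prove the lemma, fix scales $r_i = \alpha^i$ for a large constant $\alpha$ depending on $c$. At each scale take the $(n+1)$-colored partition $\mathcal{P}_i$. For each color $j \in \{0,\dots,n\}$, build one tree $T_j$ by arranging the color-$j$ clusters over all scales into a hierarchy (laminar family), then turning the hierarchy into a weighted tree in the standard HST-like way. Each node at level $i$ gets a Steiner vertex, and the edge to its parent has weight about $r_{i+1}$.

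The stretch argument then runs as follows. For $u,v$ with $\delta(u,v)\approx r_i$, look at scale $i+1$ with $r_{i+1} \ge \delta(u,v)$. The ball of radius $r_{i+1}$ around $u$ contains $v$. Some color $j$ has a cluster containing both $u$ and $v$, provided we use a padded version of the partition. So I will use the standard reformulation that every $r$-ball is \emph{contained} in a single cluster of some color, obtained by enlarging the clusters. Then $d_{T_j}(u,v)=O(c\,r_{i+1}) = O(c\,\alpha)\,\delta(u,v)$. Domination is automatic as long as the weights charged from a leaf up to a level-$i$ node are at least half the diameter of the cluster, which is $O(c r_i)$. So I set the weights to a constant times $c r_i$.

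The main obstacle is laminarity. Partitions at different scales need not nest, so the color-$j$ clusters do not form a tree. I would fix this bottom-up in the standard way, as in the Lang--Schlichtenmaier or Brodskiy--Dydak--Levin--Mitra constructions. At scale $i+1$, replace each color-$j$ cluster $C$ by the union of all scale-$\le i$ color-$j$ super-clusters whose "center" (for instance, their first point in a fixed order) lies in $C$. Because $\alpha \gg c$, the diameters then grow only by a geometric series, $O(c r_{i+1})$. The separation between same-colored clusters also shrinks only by $O(c r_i)\ll r_{i+1}$, so both the diameter bound and the padding survive with constants changed by a factor $1+O(1/\alpha)$. One point needs care. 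Points absent from the color-$j$ hierarchy at some level must still lie in $T_j$. I handle this by letting every point be a singleton at level $0$ and attaching uncovered nodes directly to the next level. This costs only a geometric sum, so domination and the $O(c\alpha)$ stretch bound are preserved, and the constant $t_H$ depends only on Liu's constant.
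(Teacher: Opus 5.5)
Your proposal is correct and is essentially the paper's proof: the paper also applies the dimension bound at a larger scale ($3r$) and enlarges by $r$ to get a colorable padded sparse cover (\Cref{lm:controlfunc_sparsecover}, which makes precise the scale shift your ``enlarging the clusters'' step needs so that same-colored clusters stay disjoint), then builds one hierarchical partition per color bottom-up over geometric scales (\Cref{lm:sparsecove_hpf}), converts this HPF into $k+1$ trees via \cite{BFN22}, and plugs in Liu's bounds. The remaining differences are cosmetic: you assign lower-level clusters by a center point where the paper merges each one into the unique same-colored cluster it intersects (using the $>r_i$ separation), and you build the HST-like trees by hand rather than citing \cite{BFN22}. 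Also, as in the paper, your stretch comes out as $O(c\alpha)=O(c^2)$, not the $O(c)$ you announce for the bridge lemma.
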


Our \Cref{thm:main} completely resolves \Cref{quest:basic} for graphs excluding a $K_h$ as a minor for any constant $h\geq 8$. (The toroidal graphs exclude $K_8$ as a minor.)

We prove \Cref{thm:main} by making a new connection between Assouad--Nagata dimension and a small tree cover. In more technical terms, a standard technique for constructing a tree cover is to construct a hierarchical partition family (HPF) with a small number of hierarchical partitions~\cite{BFN22,CCL+23a, CCL+23b, CCL+25}. Roughly speaking, an HPF is a small collection of hierarchical partitions such that every sufficiently small ball is contained in a cluster of at least one partition at the appropriate scale.  If the Assouad--Nagata dimension is $k$, we can transform a $k$-dimensional control function into what we call \EMPH{colorable sparse covers}.  Then we can combine these covers across different scales to construct an HPF of size $k + 1$, yielding a tree cover of the same size.  By establishing this connection between the Assouad--Nagata control function and tree covers, we can now invoke recent results by Liu~\cite{Liu25}, who showed that proper minor-closed classes of graph metrics have Assouad--Nagata dimension at most $2$, whereas classes of bounded-treewidth graphs have Assouad--Nagata dimension at most $1$ \cite{Liu25}, giving \Cref{thm:main}.

%% file: prem.tex
\section{Preliminaries}

We use the standard notation $[n]$ for the set $\{1, 2, 3, \ldots, n\}$.
Let $(X, \delta)$ be a finite metric space. For every $w \in X$ and every $r > 0$, $\ball_X[w, r]$ denotes the set of points in $X$ whose distance from $w$ is at most $r$. When $X$ is clear from the context, we omit the subscript and write $\ball[w, r]$.
For any two subsets $Y$ and $Z$ of $X$, define $\delta(w, Y) = \delta(Y, w) = \min_{u \in Y}\delta(w, u)$ and $\delta(Y, Z) = \min_{u \in Y}\delta(u, Z)$. Two points $u,v\in Y$ are said to be \EMPH{$r$-connected} in $Y$ if there exists a path from $u$ to $v$ in the weighted complete graph $\left(Y,\binom{Y}{2},\delta\right)$ such that every edge on the path has weight at most $r$. Being $r$-connected is an equivalence relation on $Y$ and therefore partitions $Y$ into equivalence classes. Each such equivalence class is called an \EMPH{$r$-component} of $Y$.

There are several equivalent formulations of the Assouad--Nagata dimension. We use the $r$-component formulation of a $k$-dimensional control function; see \cite[Definitions~2.1 and~3.8]{BDLM08} or \cite[Section~4.2]{BBE+23}.

\begin{definition}[Assouad--Nagata Dimension]
    Let $(X, \delta)$ be a metric space. A \EMPH{$k$-dimensional control function} of $(X, \delta)$ is a function $f: \mathbb{R}^+ \rightarrow \mathbb{R}^+$ such that, for every $r > 0$, there exist subsets $U_1, U_2, \ldots, U_{k + 1}$ satisfying the following properties.
\begin{itemize}
\item \textbf{Covering.} $\bigcup_{i = 1}^{k + 1} U_i = X$.
\item \textbf{Bounded Diameter.} For every $i \in [k + 1]$, each $r$-component $U$ of $U_i$ has diameter at most $f(r)$; i.e., for every $u, v \in U$, $\delta(u, v) \leq f(r)$.
\end{itemize}
The function $f$ is a \EMPH{dilation} if there exists a constant $c$ such that $f(r) \leq cr$ for every $r \in \mathbb{R}^+$. The Assouad--Nagata dimension of $(X, \delta)$ is the smallest integer $k$ such that $(X, \delta)$ admits a $k$-dimensional control function that is also a dilation.
\end{definition}

A graph class $\mathcal{G}$ is said to have Assouad--Nagata dimension at most $k$ if there is an universal constant $c_\mathcal{G}$ such that, for any $G \in \mathcal{G}$, the distance metric of $G$ admits a $k$-dimensional control function $f_\mathcal{G}(r) = c_\mathcal{G}r$. Note that $c_\mathcal{G}$ does not depend on any member $G$ of $\mathcal{G}$. In partitcular, any $H$-minor-free graph has a $2$-dimensional control function $f_H(r) = c_Hr$ for some constant $c_H$ depending on $H$. Throughout this note, all graphs are finite and connected. 

In this paper, we study the relationship between the Assouad--Nagata dimension and the existence of tree covers of small size.

\begin{definition}[Tree Cover]
    Let $(X, \delta)$ be a metric space. A \EMPH{$\sigma$-tree cover} of $(X, \delta)$ is a collection of edge-weighted trees $\mathcal{T} = \{T_1, T_2, \dots, T_l\}$, each of which has vertex set containing $X$, satisfying the following properties:

\begin{enumerate}
    \item For every tree $T_i \in \mathcal{T}$, its distance dominates the original metric:
    \begin{equation*}
        d_{T_i}(u, v) \geq \delta(u, v) \quad \forall u, v \in X.
    \end{equation*}

    \item For every pair of points $u, v \in X$, there exists a tree $T_j \in \mathcal{T}$ in which their distance is approximated within a factor of $\sigma$:
    \begin{equation*}
        d_{T_j}(u, v) \leq \sigma \cdot \delta(u, v).
    \end{equation*}
\end{enumerate}
The values $\sigma$ and $l$ are called the \EMPH{stretch} and \EMPH{size} of $\mathcal{T}$, respectively.
\end{definition}

A tree cover is typically constructed using a family of hierarchical partitions. In this work, we use the following notion of a hierarchical partition family (HPF), introduced by Bartal, Fandina, and Neiman \cite{BFN22}. 

\begin{definition}[$(\mu, \rho)$-Hierarchical Partition Family]
    Let $(X, \delta)$ be a finite metric in which the minimum distance between any two distinct points is $2$. A \EMPH{$\mu$-hierarchical partition}, or \EMPH{$\mu$-HP}, is a hierarchical partition of $X$, denoted by $\mathcal{P} = \{\mathcal{P}^0, \mathcal{P}^1, \ldots, \mathcal{P}^h\}$, satisfying the following properties:

\begin{enumerate}
    \item $\mathcal{P}^0 = \{\{u\} : u \in X\}$ and $\mathcal{P}^h = \{X\}$.
    \item Each $\mathcal{P}^i$ is a partition of $X$ into subsets, called clusters, each of which has diameter at most $\mu^i$.
    \item For $i < h$, $\mathcal{P}^i$ is a refinement of $\mathcal{P}^{i + 1}$; that is, each cluster in $\mathcal{P}^{i + 1}$ is a union of clusters in $\mathcal{P}^{i}$. 
\end{enumerate}

A \EMPH{$(\mu, \rho)$-hierarchical partition family}, or \EMPH{$(\mu, \rho)$-HPF}, denoted by $\mathfrak{P} = \{\mathcal{P}_1, \mathcal{P}_2, \ldots, \mathcal{P}_l\}$, is a collection of $\mu$-HPs such that, for every point $u \in X$ and every nonnegative integer $j \leq h$, there exists an index $i \in [l]$ for which some cluster in $\mathcal{P}_{i}^j$ contains $\ball_X[u, \mu^j/\rho]$. The number $l$ is called the size of the HPF. 
\end{definition}

The connection between HPF and tree covers is showed in \cite{BFN22}.

\begin{lemma}[Lemma 8 in~\cite{BFN22}]
    \label{lm:hpf-treecover}
    Given a metric space $(X, \delta)$, if $X$ admits a $(\mu, \rho)$-HPF of size $k$, then $X$ has a tree cover with stretch $\mu\rho$ and size $k$. 
\end{lemma}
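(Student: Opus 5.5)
The plan is to build one tree per hierarchical partition. Each tree will be a weighted ``cluster tree'' in which the tree distance between two points depends only on the level at which they first share a cluster. We then use the HPF covering property to bound the stretch. Throughout I assume $\mu>1$ and $\rho\geq 1$, as is implicit in the definition (otherwise the diameter requirements are degenerate).

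\emph{Construction.} Fix a $\mu$-HP $\mathcal{P}=\{\mathcal{P}^0,\ldots,\mathcal{P}^h\}$ in the family. Let $T$ have one node for every pair (cluster $C\in\mathcal{P}^i$, level $i$). Join each level-$(i-1)$ cluster to the unique level-$i$ cluster containing it; this parent is unique because $\mathcal{P}^{i-1}$ refines $\mathcal{P}^i$. The root is $X\in\mathcal{P}^h$, and the leaves are the singletons of $\mathcal{P}^0$, which we identify with the points of $X$. The edge weights are chosen so that the root-to-level sums telescope. An edge between levels $0$ and $1$ gets weight $\mu/2$. An edge between levels $i-1$ and $i$ for $i\geq 2$ gets weight $(\mu^i-\mu^{i-1})/2\geq 0$. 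Then the distance from a leaf up to its ancestor at level $j$ is $\mu/2+\sum_{i=2}^{j}(\mu^i-\mu^{i-1})/2=\mu^j/2$. Hence if $j$ is the lowest level at which $u$ and $v$ lie in a common cluster of $\mathcal{P}$, then $d_T(u,v)=\mu^j$ exactly.

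\emph{Domination.} Let $u\neq v$ with lowest common level $j$. Both points lie in one cluster of $\mathcal{P}^j$, and that cluster has diameter at most $\mu^j$. Therefore $\delta(u,v)\leq \mu^j=d_T(u,v)$, for every tree in the family.

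\emph{Stretch.} Fix $u\neq v$. Let $j\in\{0,\dots,h\}$ be minimal with $\mu^j/\rho\geq\delta(u,v)$; the case where no such $j$ exists is treated separately below. By the HPF property some $\mathcal{P}_i$ has a cluster in $\mathcal{P}_i^j$ containing $\ball[u,\mu^j/\rho]\ni v$. So in the tree $T_i$ the lowest common level of $u,v$ is at most $j$, and $d_{T_i}(u,v)\leq\mu^j$.
\begin{itemize}
\item We have $j\neq 0$: $j=0$ would force $1\geq\rho\,\delta(u,v)\geq 2\rho$, contradicting $\rho\geq 1$ and the minimum distance $2$.
\item For $j\geq 1$, minimality gives $\mu^{j-1}/\rho<\delta(u,v)$. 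Hence $d_{T_i}(u,v)\leq\mu^j<\mu\rho\,\delta(u,v)$.
\item If no such $j\leq h$ exists, then $\mu^h/\rho<\delta(u,v)$. Any tree gives $d_T(u,v)\leq\mu^h<\rho\,\delta(u,v)\leq\mu\rho\,\delta(u,v)$.
\end{itemize}
This yields $k$ trees with stretch $\mu\rho$.

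The only delicate point is choosing the edge weights so that the stretch comes out as exactly $\mu\rho$ rather than $\mu^2\rho/(\mu-1)$. The naive geometric weights give the weaker bound; the telescoping choice above makes the tree distance exactly $\mu^j$ and fixes this. The remaining edge cases are the bottom level $j=0$ and the top level $h$; both are handled by the minimum-distance normalization and the fact that the root cluster $X$ has diameter at most $\mu^h$.
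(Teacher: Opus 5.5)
Your proof is correct and is essentially the standard argument behind Lemma 8 of \cite{BFN22}, which the paper cites rather than reproves. Each $\mu$-HP induces a dominating ultrametric with level-$j$ label $\mu^j$, your telescoping weights $(\mu^i-\mu^{i-1})/2$ are the usual isometric realization of that ultrametric as a tree, and padding at the minimal level $j$ with $\mu^j/\rho\ge\delta(u,v)$ gives stretch $\mu\rho$. Your extra assumption $\rho\ge 1$ is unnecessary: $j=0$ is ruled out directly by padding, since $v\in\ball[u,1/\rho]$ would force a singleton cluster of some $\mathcal{P}_i^0$ to contain both $u$ and $v\neq u$.
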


%% file: 3treecover.tex
\section{Tree Cover Construction}

To show that $(X, \delta)$ has a tree cover of size $k + 1$, we first show that $(X, \delta)$ admits an HPF of size $k + 1$, and then construct a tree cover from the HPF. Throughout this section, we assume that $|X| > 1$ and rescale the metric so that the minimum distance between any two distinct points of $X$ is $2$. Our main theorem is:

\begin{theorem}
    \label{thm:andim-treecover}
    Let $(X, \delta)$ be a finite metric space that admits a $k$-dimensional control function $f(x) \leq c \cdot x$ for some positive constant $c$, then $X$ has a tree cover of size $k + 1$ and stretch $12(3c + 2)^2$. 
\end{theorem}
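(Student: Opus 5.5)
The plan is to build a $(\mu,\rho)$-HPF of size $k+1$ with $\mu\rho = O(c^2)$ and then apply \Cref{lm:hpf-treecover}. There are two steps: first turn the control function at a single scale into a \emph{colorable sparse cover}, then stitch these covers across scales into $k+1$ hierarchical partitions, one per color. The HPF must consist of partitions at \emph{every} level of each hierarchy, so the scales, the padding radius and $\mu$ have to be chosen so that the padded clusters from different levels nest.

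\textbf{Step 1 (one scale).} Fix $r>0$ and take $U_1,\dots,U_{k+1}$ from the $k$-dimensional control function. For a color $i$, distinct $r$-components of $U_i$ are at distance $>r$ from each other, and each has diameter at most $cr$. For each $r$-component $C$ of $U_i$, let $\widehat C=\ball[C,r/3]=\{x:\delta(x,C)\le r/3\}$. Then:
\begin{itemize}
\item the sets $\widehat C$ of a fixed color are pairwise at distance $>r/3$, so in particular disjoint;
\item each $\widehat C$ has diameter at most $(c+2/3)r$;
\item every $u\in X$ lies in some $U_i$, so $\ball[u,r/3]\subseteq\widehat C$ for the component $C\ni u$.
\end{itemize}
This gives $k+1$ families of disjoint, well-separated, bounded-diameter clusters such that every $r/3$-ball is inside one of them.

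\textbf{Step 2 (hierarchy).} I will do this bottom-up for each color $i$ separately. Start with $\mathcal P_i^0$ as the singletons; since the minimum distance is $2$, this is valid. At level $j\ge1$, use scale $r_j=a\mu^{j-1}$ for a constant $a$ to be fixed.
\begin{itemize}
\item For each color-$i$ cluster $\widehat C$ at scale $r_j$, form a new cluster as the union of all clusters of $\mathcal P_i^{j-1}$ that meet $\widehat C$.
\item Every cluster of $\mathcal P_i^{j-1}$ that meets no $\widehat C$ is copied to $\mathcal P_i^{j}$ unchanged.
\item At the top level $h$, put $\mathcal P_i^h=\{X\}$, with $h$ large enough that $\mu^h$ exceeds the diameter.
\end{itemize}

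Refinement holds by construction. For disjointness, note that a level-$(j-1)$ cluster has diameter at most $\mu^{j-1}$. It therefore cannot meet two different $\widehat C$'s, provided $\mu^{j-1}<r_j/3$, i.e.\ $a>3$; I take $a=6$. For the diameter, each new cluster has diameter at most $(c+2/3)r_j+2\mu^{j-1}=(6c+6)\mu^{j-1}$, so $\mu=6c+6$ suffices. Copied clusters have diameter at most $\mu^{j-1}\le\mu^j$. For padding, $\ball[u,r_j/3]=\ball[u,2\mu^{j-1}]$ lies in some $\widehat C$ of some color, hence in a cluster of $\mathcal P_i^j$. This gives $\rho=\mu/2$, up to adjusting constants. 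The result is a $(\mu,\rho)$-HPF with $\mu\rho=O(c^2)$. Tuning $a$ and the padding radius should reproduce the stated $12(3c+2)^2$, e.g.\ with $\mu$ on the order of $2(3c+2)$.

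\textbf{Main obstacle.} The delicate point is the interaction between levels. Absorbing lower-level clusters inflates diameters, and two padded clusters must not compete for the same lower cluster. Both issues are controlled only because the separation $r_j/3$ dominates the lower-level diameter $\mu^{j-1}$. Getting this ordering of constants right, consistently at every level (and at the top level $h$), is the step I would check most carefully.
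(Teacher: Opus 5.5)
Your proof is correct and follows essentially the paper's route: a colorable sparse cover at each scale built from the control function, then per-color bottom-up merging into $k+1$ hierarchical partitions, then \Cref{lm:hpf-treecover}. Your Step 1 clusters (the $r/3$-neighbourhoods of the $r$-components of $U_i$) are a slightly cleaner variant of the paper's $r$-components of $L(U_i)$ at scale $3r$, and your parameters $\mu=6c+6$, $\rho=\mu/2$ already give stretch $\mu\rho=18(c+1)^2<12(3c+2)^2$, so the closing ``tuning'' remark is unnecessary.
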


We show how to construct a $(\mu, \rho)$-HPF for a metric space of Assouad--Nagata dimension $k$.
Equivalently we show that, if a metric space $(X, \delta)$ admits a $k$-dimensional control function $f$ such that $f$ is a dilation, then $(X, \delta)$ also admits an HPF of size $k + 1$. To do that, we construct sparse covers of $X$ for multiple distance scales, then use sparse covers to build an HPF. 

\begin{definition}[Sparse Cover]
    Given a metric space $(X, \delta)$, a \EMPH{$(\rho, k, \Delta)$-sparse cover} of $X$ is a set of clusters $\mathcal{A} = \{C_1, C_2, \ldots C_l\}$, each is a subset of $X$, satisfying the following properties:
    \begin{enumerate}
        \item \textbf{Padding}: for every $u \in X$, there is an $C_i$ containing every point in $\ball_X[u, \Delta/\rho]$, and
        \item \textbf{Sparsity}: for every $u \in X$, there are at most $k$ clusters containing $u$, and
        \item \textbf{Bounded Diameter}: for every $i$, each $C_i$ has diameter at most $\Delta$.
    \end{enumerate}
    The number $k$ is the sparsity of the sparse cover. 
\end{definition}

A $(\rho, k, \Delta)$-sparse cover $\mathcal{A}$ of $X$ is \EMPH{colorable} if we can partition $\mathcal{A}$ into $k$ subsets $\{\mathcal{A}_1, \mathcal{A}_2, \ldots \mathcal{A}_k\}$, such that the distance between any two sets in $\mathcal{A}_i$ is greater than $\frac{\Delta}{\rho}$. 

\begin{lemma}[Assouad--Nagata Dimension and Colorable Sparse Cover]
    \label{lm:controlfunc_sparsecover}
    Let $(X, \delta)$ be a finite metric space that admits a $k$-dimensional control function $f(x) \leq c\cdot x$, for some positive constant $c$, then for every $r > 0$, $X$ admits a colorable $(3c + 2, k + 1, (3c + 2)r)$-sparse cover.
\end{lemma}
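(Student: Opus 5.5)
Fix $r>0$ and apply the control function at scale $r' = r$: there are sets $U_1,\dots,U_{k+1}$ covering $X$ such that every $r$-component of every $U_i$ has diameter at most $f(r)\le cr$. The natural clusters are fattened components. For each $i$ and each $r$-component $U$ of $U_i$, define the cluster
\[
C_U = \{x\in X : \delta(x,U)\le r\} = \bigcup_{u\in U}\ball[u,r],
\]
and let $\mathcal{A}_i$ be the collection of these clusters over all $r$-components $U$ of $U_i$. The sparse cover is $\mathcal{A}=\bigcup_i \mathcal{A}_i$, with the coloring $\{\mathcal{A}_1,\dots,\mathcal{A}_{k+1}\}$. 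With $\Delta=(3c+2)r$ and $\rho=3c+2$ we have $\Delta/\rho = r$, so the padding radius is exactly $r$. The clusters must therefore absorb $r$-balls while staying $r$-separated within each color class. Taking the fattening radius to be $r$ meets the first requirement, but the second needs care.

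The properties are checked in the following order.

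\textbf{Diameter.} Any two points of $C_U$ lie within $r$ of points of $U$, so $\operatorname{diam}(C_U)\le cr+2r\le(3c+2)r$.

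\textbf{Padding.} For $u\in X$, pick $i$ with $u\in U_i$ and let $U$ be its $r$-component. Then $\ball[u,r]\subseteq C_U$.

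\textbf{Sparsity.} Within one color class, distinct clusters will be shown disjoint (by the separation step), so each point lies in at most one cluster per color, hence in at most $k+1$ clusters.

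\textbf{Separation.} This is the main obstacle. Distinct $r$-components $U,U'$ of $U_i$ satisfy $\delta(U,U')>r$. After fattening both by $r$, however, $C_U$ and $C_{U'}$ may even intersect. So the na\"ive choice fails, and the scales must be decoupled.

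I would first try applying the control function at a larger scale $R=3r$, giving components of diameter at most $3cr$ that are pairwise more than $3r$ apart. Fattening by $r$ then leaves a gap greater than $3r-2r=r=\Delta/\rho$, which gives separation. The diameter bound becomes $3cr+2r=(3c+2)r=\Delta$, matching the constant in the statement exactly. Padding still holds, since the fattening radius $r$ equals $\Delta/\rho$.

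Thus the final construction is: apply the control function at scale $3r$, fatten each $3r$-component by $r$, and color by the index $i$. The remaining details are routine triangle-inequality checks. For strict separation, one verifies that if $x\in C_U$ and $y\in C_{U'}$ had $\delta(x,y)\le r$, then there would exist $u\in U$ and $u'\in U'$ with $\delta(u,u')\le 3r$. That would make $u$ and $u'$ $3r$-connected in $U_i$, contradicting that $U$ and $U'$ are distinct $3r$-components.
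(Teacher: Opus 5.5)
Your proof is correct and takes essentially the paper's approach: apply the control function at scale $3r$, fatten by $r$, and use $3r - r - r = r$ to keep distinct $3r$-components apart. The only difference is where that inequality is used. The paper takes as clusters the $r$-components of $\bigcup_{u\in U_i}\ball[u,r]$, so separation is automatic and a path argument places each cluster inside the fattening of a single $3r$-component. You take the fattened $3r$-components themselves, so the diameter bound is immediate and separation is your direct triangle-inequality check; the paper's clusters refine yours.
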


\begin{proof}
    By definition of $k$-dimensional control function, there exists a set $\mathcal{U} = \{U_1, U_2, \ldots U_{k + 1}\}$ such that, each $3r$-component of every $U_i$ has diameter bounded by $f(3r)$. For every $i$, let $L(U_i)$ be the set of points in $\bigcup_{u \in U_i}\ball_X[u, r]$. Let $\mathcal{A}_i$ be the set of clusters, each is an $r$-component in $L(U_i)$ and $\mathcal{A} = \bigcup_{i \in [k + 1]}\mathcal{A}_i$. We show that $\mathcal{A}$ is a $(3c + 2, k + 1, (3c + 2)r)$-sparse cover. Furthermore, we show that $\mathcal{A}$ is colorable by proving the distance between any two sets in $\mathcal{A}_i$ is greater than $r$ for every $i$. 
    \paragraph{Sparsity and Colorability.} By the definition of an $r$-component, the distance between any two clusters in $\mathcal{A}_i$ is greater than $r$. Moreover, the clusters in $\mathcal{A}_i$ are pairwise disjoint, so every point of $X$ belongs to at most one cluster in $\mathcal{A}_i$. Since there are $k + 1$ collections $\mathcal{A}_1, \ldots, \mathcal{A}_{k + 1}$, every point belongs to at most $k + 1$ clusters in $\mathcal{A}$.
   \paragraph{Padding.}For every $u \in X$, there exists an $i \in [k + 1]$ such that $u \in U_i$. Therefore, $\ball_X[u, r] \subseteq L(U_i)$.  Moreover, every point of $\ball_X[u, r]$ is at distance at most $r$ from
    $u$. Therefore, $\ball_X[u, r]$ is contained in a single $r$-component of
    $L(U_i)$. Hence, there exists a cluster in $\mathcal{A}_i$ containing
    $\ball_X[u, r]$. 
   
   \paragraph{Bounded Diameter.}Fix $i \in [k + 1]$. Consider an $r$-component $C$ of $L(U_i)$. We show that $\diam(C) \leq f(3r) + 2r$. We show by contradiction that $C$ cannot have non-empty intersection with more than one $3r$-components of $U_i$. Assume otherwise, let $C'_1$ be a $3r$-component that intersects $C$. Such a component must exist: since $C$ is nonempty, the definition of $L(U_i)$ implies that there exists a point $u \in U_i$ such that $C \cap \ball_X[u, r] \neq \emptyset$. Hence, $C$ must contain $u$ by definition of an $r$-component. We abuse the notation and let $L(C'_1) = \bigcup_{u \in C'_1}\ball_X[u, r]$.
   
   We prove that $C \subseteq L(C'_1)$. Let $w$ be a point in $C \cap L(C'_1)$ and $w'$ be a point in $C \setminus L(C'_1)$. By definition of $r$-component, there is a path $P$ from $w$ to $w'$ in $\left(X, \binom{X}{2}, \delta\right)$ such that $P$ contains only points in $C$ and any edge in $P$ has weight at most $r$. Let $(u, v)$ be the first edge in $P$ such that $u \in L(C'_1)$ and $v \not\in L(C'_1)$. The edge $(u, v)$ exists since $w \in L(C'_1)$ and $w' \not\in L(C'_1)$. 
   Hence, $\delta(u, v) \leq r$. 
   Let $u'$ be the point in $C'_1$ such that $u \in \ball_X[u', r]$. 
   Let $v'$ be the point in $U_i$ such that $v \in \ball_X[v', r]$. If there are multiple such $u', v'$, choose them arbitrarily.
   If $v' \in C'_1$, $v \in L(C'_1)$ by definition of $L(C'_1)$, contradicting the choice of $v\notin L(C'_1)$. Thus, $v'$ is in another $3r$-component of $U_i$. 
   Since $u'$ and $v'$ are in different $3r$-components, $\delta(u', v') > 3r$. By triangle inequality,
   \begin{equation*}
       \delta(u, v) \geq \delta(u', v') - \delta(u, u') - \delta(v, v') > 3r - r - r = r, 
   \end{equation*}
   a contradiction. Therefore, $C$ intersects only one $3r$-component $C_1'$ of $U_i$, implying $C \subseteq L(C_1')$. The diameter of $L(C_1')$ is at most $f(3r) + 2r$ by triangle inequality. Then, $\diam(C) \leq f(3r) + 2r \leq (3c + 2)r$.
\end{proof}

\begin{lemma}[Colorable Sparse Cover Implies HPF]
    \label{lm:sparsecove_hpf}
    Let $(X, \delta)$ be a finite metric space. Suppose that, for some constant $c \geq 2$ and every $r > 0$, $X$ admits a colorable $(c, k, cr)$-sparse cover. Then, $X$ admits a $(6c, 2c)$-HPF of size $k$. 
\end{lemma}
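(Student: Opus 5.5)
We build the $k$ hierarchical partitions one color at a time, bottom-up over scales. Set $\mu = 6c$ and, for each level $j \ge 1$, take a colorable $(c,k,\mu^{j}/2)$-sparse cover $\mathcal{A}^{(j)} = \mathcal{A}^{(j)}_1 \cup \dots \cup \mathcal{A}^{(j)}_k$. This means setting $r = \mu^j/(2c)$.

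The cover has three properties we will use:
\begin{itemize}
\item Its clusters have diameter at most $\mu^j/2$.
\item Every ball $\ball[u,\mu^j/(2c)]$ lies in some cluster.
\item Two clusters of the same color are at distance greater than $\mu^j/(2c)$.
\end{itemize}

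For each color $i\in[k]$ we define $\mathcal{P}_i^0$ to be the partition into singletons. Inductively, $\mathcal{P}_i^{j}$ is obtained from $\mathcal{P}_i^{j-1}$ as follows. For each cluster $A\in\mathcal{A}^{(j)}_i$, let $\widehat{A}$ be the union of all clusters of $\mathcal{P}_i^{j-1}$ that intersect $A$. The sets $\widehat{A}$ become clusters of $\mathcal{P}_i^j$. Every remaining cluster of $\mathcal{P}_i^{j-1}$ that meets no $A\in\mathcal{A}^{(j)}_i$ is kept unchanged. At the top level $h$, with $\mu^h$ exceeding the diameter, we simply put $\mathcal{P}_i^h=\{X\}$; padding at level $h$ is then trivial.

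Three properties must be checked.

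\emph{Refinement.} This holds by construction, provided the sets $\widehat{A}$ for distinct $A$ of the same color are disjoint. Two such sets could only collide through a single level-$(j-1)$ cluster $Q$ meeting both $A$ and $A'$. But $\diam(Q)\le \mu^{j-1}$, while $\delta(A,A') > \mu^j/(2c) = 3\mu^{j-1} > \mu^{j-1}$, so this is impossible. This is exactly where the colorability is used, and it explains the choice $\mu = 6c$.

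\emph{Diameter.} By induction, $\diam(\widehat{A}) \le \diam(A) + 2\mu^{j-1} \le \mu^j/2 + 2\mu^{j}/(6c) \le \mu^j$, using $c\ge 2$. Unchanged clusters have diameter at most $\mu^{j-1}\le\mu^j$.

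\emph{Padding with $\rho=2c$.} Fix $u$ and $j$. The ball $\ball[u,\mu^j/(2c)]$ is contained in some $A\in\mathcal{A}^{(j)}_i$. Every point of $A$ lies in a level-$(j-1)$ cluster that meets $A$, so $A\subseteq\widehat{A}$. Hence $\widehat{A}\in\mathcal{P}_i^j$ contains $\ball[u,\mu^j/\rho]$. At level $0$ the ball $\ball[u,1/(2c)]$ is just $\{u\}$, since the minimum distance is $2$.

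The main obstacle is the diameter induction. The growth step adds $2\mu^{j-1}$, not an accumulated sum over all lower levels, because $\widehat{A}$ only absorbs level-$(j-1)$ clusters, whose diameter is already bounded by $\mu^{j-1}$. This needs care, together with the disjointness argument that uses the separation $\mu^j/(2c)$ against $\mu^{j-1}$. If the constants turned out too tight, the first fix would be to shrink the cover scale, for instance using diameter $\mu^j/3$, while keeping $\mu=6c$.
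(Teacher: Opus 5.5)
Your proposal is correct and is essentially the paper's proof. It uses the same $\mu = 6c$ and cover scale $r = \mu^j/(2c)$, and the same color-by-color merging of level-$(j-1)$ clusters that meet a common same-color cover cluster, with disjointness coming from $3\mu^{j-1} > \mu^{j-1}$. The diameter bound $\mu^j/2 + 2\mu^{j-1} \le \mu^j$, the padding argument via $A \subseteq \widehat{A}$, and the top level $\{X\}$ also match the paper, and the constants already have slack, so the fallback in your last paragraph is unnecessary.
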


\begin{proof}
    Recall that minimum distance between any two distinct points in $X$ is $2$. Let $\mu = 6c$. We construct an HPF as follows: 
    \begin{enumerate}
        \item Let $h = \max\{1, \lceil\log_{\mu}\diam(X)\rceil\}$. For each $i$ from $0$ to $h$, let $r_i = \mu^i/2c$. For $i \in [h - 1]$, construct a colorable $(c, k, cr_i)$-sparse cover, denoted by $\mathcal{A}^i$. By colorability, $\mathcal{A}^i$ can be partitioned into subcollections $\mathcal{A}^i_1, \mathcal{A}^i_2, \ldots, \mathcal{A}_{k}^i$ such that the clusters in each subcollection are at distance greater than $r_i$ from one another. 
        \item For each $j \in [k]$, construct an HP $\mathcal{P}_j = \{\mathcal{P}^0_j, \mathcal{P}^1_j, \ldots, \mathcal{P}^h_j\}$ from $\mathcal{A}^1_j, \mathcal{A}^2_j, \ldots, \mathcal{A}^{h - 1}_j$ as follows. First, let $\mathcal{P}^0_j = \{\{u\}: u \in X\}$ and $\mathcal{P}^h_j = \{X\}$. For $i \in [h - 1]$, each cluster in $\mathcal{P}_j^i$ is either the union of clusters in $\mathcal{P}^{i - 1}_j$ that intersect the same cluster in $\mathcal{A}^i_j$ or a single cluster in $\mathcal{P}^{i - 1}_j$. Specifically, for each cluster $C$ in $\mathcal{P}^{i - 1}_j$, mark a cluster in $\mathcal{A}^i_j$ that intersects $C$. If no such cluster exists, add $C$ to $\mathcal{P}_j^{i}$. We later show by induction that each cluster $C$ in $\mathcal{P}^{i - 1}_j$ has diameter at most $\mu^{i - 1} = r_i/3$ and any two clusters in $\mathcal{A}^i_j$ are at distance greater than $r_i$ from each other. Then, $C$ can intersect at most one cluster in $\mathcal{A}^i_j$. Finally, for each cluster $F$ in $\mathcal{A}_j^i$, create a cluster in $\mathcal{P}^i_j$ by taking the union of all clusters in $\mathcal{P}^{i - 1}_j$ that mark $F$.  
    \end{enumerate}

    By induction on $i$, each $\mathcal{P}_j^i$ is a partition of $X$. Moreover, by construction, each cluster in $\mathcal{P}_j^i$ is a union of clusters in $\mathcal{P}_j^{i - 1}$. We next show by induction that every cluster in $\mathcal{P}_j^i$ has diameter at most $\mu^i$. The claim holds for $i = 0$, since $\mathcal{P}_j^0$ contains only singleton clusters. For $i \leq h - 1$, assume that every cluster in $\mathcal{P}_j^{i - 1}$ has diameter at most $\mu^{i - 1}$. Consider a cluster $C$ in $\mathcal{P}_j^{i}$. If $C$ also belongs to $\mathcal{P}_j^{i - 1}$, then its diameter is at most $\mu^{i - 1} < \mu^i$. Otherwise, $C$ is an union of clusters in $\mathcal{P}_j^{i - 1}$ that intersect some $F$ in $\mathcal{A}_j^i$. Note that $C$ contains $F$ since $\mathcal{P}_j^{i - 1}$ is a partition of $X$. In this case, the diameter of $C$ is at most $\diam(F) + 2\mu^{i - 1} \leq \mu^i/2 + 2\mu^{i - 1} \leq \mu^i$. For $i = h$, $\mathcal{P}_j^h$ consists of the single cluster $X$, whose diameter satisfies $\diam(X) = \mu^{\log_\mu(\diam(X))} \leq  \mu^h$. 

    Thus, each $\mathcal{P}_j$ is a $\mu$-HP of $X$. Let $\mathfrak{P} = \{\mathcal{P}_1, \mathcal{P}_2, \ldots, \mathcal{P}_k\}$. We now verify that $\mathfrak{P}$ satisfies the padding property. For every $u \in X$ and every nonnegative $i \leq h$, consider the ball $\ball_X[u, \mu^i/2c] = \ball_X[u, r_i]$. If $i = 0$, $\ball_X[u, r_i]$ contains only the point $u$ and is therefore contained in a cluster $\{u\} \in \mathcal{P}^0_j$ for every $j \in [k]$. If $i \in [h - 1]$, there exist an index $j \in [k]$ and a cluster $F \in \mathcal{A}^i_j$ containing $\ball_X[u, r_i]$. By construction, some cluster in $\mathcal{P}^i_j$ contains $F$ and hence also contains $\ball_X[u, r_i]$. For $i = h$, $\ball_X[u, r_i]$ is contained in the cluster $X$ in the highest-level partition $\mathcal{P}^h_j = \{X\}$ for any $j$. Therefore, $\mathfrak{P}$ is a $(6c, 2c)$-HPF of size $k$. 
\end{proof}

We now ready to prove our main theorem.

\begin{proof}[Proof of \Cref{thm:andim-treecover}]
    By \Cref{lm:controlfunc_sparsecover}, we obtain that for any distance $r > 0$, $X$ admit a colorable $(3c + 2, k + 1, (3c + 2)r)$-sparse cover. From \Cref{lm:sparsecove_hpf}, $X$ has a $(6(3c + 2), 2(3c + 2))$-HPF of size $k + 1$. Combining with \Cref{lm:hpf-treecover}, we conclude that $X$ has a tree cover of size $k + 1$ and stretch $12(3c + 2)^2$.
\end{proof}

\Cref{thm:main} follows directly from \Cref{thm:andim-treecover}.